\newtheorem{theorem}{Theorem}
\newcommand{\comments}[1]{} 
\newcommand{\coo}{\rm{CO_2}} 
\acrodef{OPEX}[OPEX]{Operating Expenses}
\acrodef{UE}[UE]{User Equipment}
\acrodef{BS}[BS]{Base Station}
\acrodef{DTX}[DTX]{Discontinuous Transmission}
\acrodef{PAPR}[PAPR]{Peak-to-Average Power Ratio }
\acrodef{SC-FDMA}[SC-FDMA]{Single-carrier FDMA}
\acrodef{FDMA}[FDMA]{Frequency Division Multiple Access}
\acrodef{TDMA}[TDMA]{Time Division Multiple Access}
\acrodef{CDMA}[CDMA]{Code Division Multiple Access}
\acrodef{OFDMA}[OFDMA]{Orthogonal Frequency Division Multiple Access}
\acrodef{ICT}[ICT]{Information and Communication Technologies}
\acrodef{QoS}[QoS]{Quality of Service}
\acrodef{PA}[PA]{Power Amplifier}
\acrodef{RS}[RS]{Resource Sharing}
\acrodef{PC}[PC]{Power Control}
\acrodef{SOTA}[SotA]{State-Of-The-Art}
\acrodef{EE}[EE]{Energy Efficiency}
\acrodef{SINR}[SINR]{Signal-to-Interference-and-Noise-Ratio}
\acrodef{LTE}[LTE]{Long Term Evolution}
\acrodef{EARTH}[EARTH]{Energy Aware Radio and neTwork tecHnologies}
\acrodef{MIMO}[MIMO]{Multiple-Input and Multiple-Output (transmission)}
\acrodef{SISO}[SISO]{Single-Input and Single-Output (transmission)}
\acrodef{RE}[RE]{Resource Element}
\acrodef{SNR}[SNR]{Signal-to-Noise-Ratio}
\acrodef{ACLR}[ACLR]{Adjacent Carrier Leakage Ratio}
\acrodef{PRAIS}[PRAIS]{Power and Resource Allocation Including Sleep}
\acrodef{CSI}[CSI]{Channel State Information}
\begin{document}

%
\title{Downlink Power Control Minimizing Base Station Energy Consumption}

\author{{Hauke~Holtkamp, Gunther~Auer}\vspace{3mm}\\
DOCOMO Euro-Labs\\
D-80687 Munich, Germany \\Email:
\{holtkamp, auer\}@docomolab-euro.com
\and Harald~Haas\vspace{2mm}\\
Institute for Digital Communications\\
Joint Research Institute for Signal and Image Processing\\
 The University of Edinburgh,
EH9 3JL, Edinburgh, UK\\ E-mail: h.haas@ed.ac.uk}


\acresetall


\maketitle

\begin{abstract}
We consider single cell multi-user OFDMA downlink resource allocation such that average supply power is minimized while fulfilling a set of target rates. Available degrees of freedom are transmission power and duration. This paper extends our previous work on power optimal resource allocation in the mobile downlink by detailing the optimal power control strategy investigation and extracting fundamental characteristics of power optimal operation in cellular downlink. The allocation strategy that minimizes overall power consumption requires the transmission power on all links to be increased if only one link degrades. Furthermore, we show that for mobile stations with equal channels but different rate requirements, it is power optimal to assign equal transmit powers with proportional transmit durations. To relate the effectiveness of power control to live operation, we consider different variants of the power model which maps transmit power to supply power. We show that due to the affine mapping, the solution is independent of the power model. However, the effectiveness of power control measures is completely dependent on the underlying hardware and the load dependence factor of a base station (instead of absolute consumption values). Finally, we conclude that power control measures in base stations are most relevant in macro stations which have a load dependence factor exceeding 50\%.

\end{abstract}

\section{Introduction}
Mobile traffic volume is growing at a rapid pace. This requires more \acp{BS} which are more powerful and more densely deployed. The operation of todays' mobile networks causes $0.3\%$ of global $\coo$ emissions and $80\%$ of the operating energy is spent at the radio \ac{BS} sites causing significant electricity and diesel bills for network operators~\cite{fmbf1001, aghimfbhz1001}. These combined environmental concerns and rising energy costs provide strong incentives to reduce the power consumption of future \ac{BS} generations.

Traffic statistics reveal that network load varies significantly over the course of a day and different deployment patterns such as rural, suburban or urban. While \acp{BS} operate at maximum efficiency during peak hours, their load adaptability is limited resulting in low energy efficiency in low load situations. We aim at actively decreasing the power consumption via power control during low loads when the spectral and computational resources far outweigh the traffic demand.

Recent studies show that the overall power consumption of a radio transceiver is dominated by the consumption of the \ac{PA} \cite{fbzfgjt1001}. Reduction of transmit power causes a significant decrease of the consumed power in the \ac{PA}. This general notion is analytically included in this paper via the EARTH\footnote{EU funded research project EARTH (Energy Aware Radio and neTwork tecHnologies), FP7-ICT-2009-4-247733-EARTH, Jan.~2010 to June~2012. https://www.ict-earth.eu} power model, which determines the overall \ac{BS} power consumption on the basis of transmit power~\cite{czbfjgav1001}. In this fashion, it is possible to determine the achieved absolute savings of power control strategies.

In our previous work, we have considered power control as an added improvement to systems which are capable of sleep modes~\cite{hah1101}. We evaluated the combined gains on several generations of base stations. In this work, we provide a detailed investigation of the optimal stand-alone power control strategy when \acp{BS} are not capable of sleep modes.

Power control has been extensively studied as a tool for rate maximization when spectral resources are sparse~\cite{g9701, ssha0801a}. But these strategies need to be adjusted when maximizing the spectral efficiency is not the objective function. Wong et al.~\cite{wclm9901} provide an algorithmic strategy for power allocation in multiuser \ac{OFDMA}, but only for static modulation. By employing the Shannon limit directly, we presume optimal modulation. Al-Shatri et al.~\cite{aw1001} optimize the operating efficiency, but require a predetermined static number of resources per user. Cui et al.~\cite{cgb0501} optimize the modulation scheme on the link level, which we extend to the \ac{BS} level by incorporating the EARTH power model. This allows for minimizing the base station supply power, rather than the output power radiated at the antenna elements.. 


The remainder of this paper is structured as follows. The system model is presented in Section~\ref{system} which results in the optimal allocation strategy. In Section~\ref{methodology}, first the optimal allocation of transmission power and time are discussed. It is then derived what transmit powers users with equal channels should receive. The power model is added in Section~\ref{powermodel} to change the optimization variable from transmit power to supply power. Conclusions are drawn in Section~\ref{conclusion}.

\section{System model}
\label{system}
When a number of bits transmitted on a link with fixed bandwidth and optimal modulation is to be increased, there are two options. Either increase the link rate by raising the transmission power or transmit for a longer time. From a power efficiency perspective providing a fixed rate is thus a trade-off between transmit power and transmission duration. As illustrated in Figure~\ref{PC}, a higher power for smaller duration will provide the same rate as a lower power with higher transmission duration. For a single link, transmission should always be the longest to allow for the lowest transmit power. However, in a shared multi-user channel with orthogonal access, all links have to be considered which each have individual rate requirements that have to be fulfilled in a set time.

\begin{figure}
\begin{center}
\includegraphics[width=0.6\textwidth]{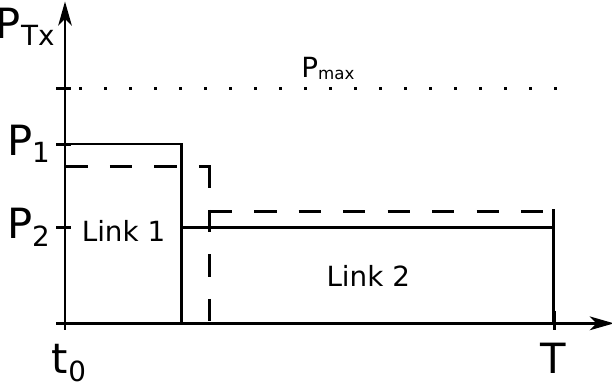}
\end{center}
\caption{Illustration of two possible power/time trade-offs that provide equal rates on both links.}
  \label{PC}
\end{figure}

We proceed with a derivation of the optimal allocation strategy. We employ the Shannon bound to map transmission power to achievable rate as the most fundamental law of energy consumption in communications. We assume that in a system a set of mobiles with known \ac{CSI} has a set of rate requirements that needs to be fulfilled at minimum average system-wide power consumption. 

A cell consists of a \ac{BS} and $N_{\rm{L}}$ links (mobiles). The rate per link $i$ is upper bounded by
\begin{equation}
R_i = W \log_2 \left( 1 + \gamma_i \right),
\label{eq:Shannon}
\end{equation}
where $W$ is the channel bandwidth in Hz and $\gamma_i = \frac{G_{i} P_i}{N_0}$ is the \ac{SNR} with $G_{i}$ the link channel gain, $P_i$ the transmit power on link $i$ in W and $N_0$ thermal noise in W. The noise power is defined by $N_0 = W k \vartheta$ with Boltzmann constant~$k$ and operating temperature~$\vartheta$ in Kelvin. While rate and bandwidth are linearly related, rate and transmit power have a logarithmic relationship. As a consequence it is much more expensive in terms of power to increase channel rate than in terms of bandwidth. In other words, if there is a choice between leaving idle bands and transmitting at higher power and using all available bands and transmitting at the lowest required power, then the latter will always consume less overall transmit power. 

The average target rate $\overline{R}_i$ per link in bps has to be fulfilled within a total time frame $T$. Total energy consumed by the system is defined as 
\begin{equation}
\label{esys}
E_{\rm{SYS}} = \overline{P}_{\rm{SYS}} T
\end{equation}
in J, where $\overline{P}_{\rm{SYS}}$ is the system average power consumption. 

Normalized transmission time per link is given by
\begin{equation}
 \label{eq:mu}
  \mu_i = \frac{t_i}{T}
\end{equation}
where $t_i$ is the transmission time on link $i$ in seconds and $\mu_i > 0$. Optimization of normalized time results in average power minimization and is more illustrative than energy minimization (which is only meaningful for a known $T$). 

The average target rate on link~$i$ over the time slot~$T$ depends on the transmission time $\mu_i$ and the rate during transmission, $R_i$:
\begin{equation}
 \overline{R}_i = \mu_i R_i.
\end{equation}

We are now able to find the transmission power per link as a function of the required average rate: 
\begin{equation}
\label{eq:ptxavg}
 \overline{P}_{\rm{Tx},\it{i}}(\overline{R}_i) = \frac{N_0}{G_i} \left( 2^\frac{\overline{R}_i}{W \mu_i} - 1 \right),
\end{equation}
where $0 < \overline{P}_{\rm{Tx},\it{i}}(\overline{R}_i) < P_{\rm{max}}$ for some $P_{\rm{max}}$.

To account for the fact that all links are served by the \ac{BS} orthogonally on the shared resource, the system average transmission power at the base station for all links over $T$ is the sum of individual transmit powers weighted with the transmit duration~\cite{hah1101}
\begin{equation}
\label{PSYS}
\begin{aligned}
 \overline{P}_{\rm{SYS}}(\overline{R}_i)  &= \sum_{i=1}^{N_{\rm{L}}} \mu_i  \cdot P_{\rm{Tx},i}(\overline{R}_i)  \\
					  &= \sum_{i=1}^{N_{\rm{L}}} \mu_i  \frac{N_0}{G_i} \left( 2^\frac{\overline{R}_i}{W \mu_i} - 1  \right)
\end{aligned}
\end{equation}
where all links have to be served in the available time $T$. The combined duration of all transmissions must be less or equal to $T$. But since it is clearly most efficient to use the entire available $T$, it holds that
\begin{equation}
\label{muconstraint}
\displaystyle\sum_{i=1}^{N_{\rm{L}}} \mu_i = 1.
\end{equation}

The allocation vector of transmission durations which minimizes \eqref{PSYS} is power optimal. 

\section{Methodology and Results}
\label{methodology}	\fancyhead[l]{
		\includegraphics[width=23mm,height=23mm]{FutureNetworkAndMobileSummitLogo}
		\tiny \hspace*{1mm} \\
	}

\subsection{On transmission power and transmission time}

For a particular target rate on a link, a lower transmit time results in a higher transmission power and vice versa. Since transmission power is capped by $P_{\rm{max}}$, this in turn lower bounds $\mu_i$. If the target rates cannot be fulfilled without violation of $P_{\rm{max}}$ on all links (or if $\sum_{i} \mu_i > 1$), then the system is overloaded. Thus, the optimal allocation is a trade-off of transmission times between links which depend on the individual channel gains and target rates. First, we inspect the behavior of transmit powers and times as they depend on the channel gains.

For illustration, the individual optimal transmission powers and times of a system with two links and equal target rates are plotted in Figure~\ref{fig:DiffoverG1G21bps} with the parameter set in Table~\ref{tab:uniformdistrscenarioparameters}. In this setting, the channel gain on link $2$ is constant, while it is sweeped over a range of $100$~dB for link~$1$. Plotted in Figure~\ref{fig:DiffP1P2overG1G21bps} are the system optimal transmit powers. The x-axis contains the ratio of the two channel gains (which equals the difference in dB) to emphasize the relative gap in channel gains. When channel gains are equal the power optimal allocation is clearly to assign equal powers. As link~$2$ degrades, more power should be allocated to it. The important finding is here that for optimal power allocation, transmit powers have to be increased on both links as channel gain difference increases. It is not possible to minimize \ac{BS} power consumption by considering individual links. Rather, all links have to be considered simultaneously.

\begin{table}
      \caption{Simulation Parameters}
\centering
      \begin{tabular}{c|c}
	Parameter          			& Value\\
	\hline
	Bandwidth $W$ 				& 10\,MHz \\
	Thermal noise 				& $-103\,$dBm \\
	Static channel gain 			& $-100\,$dB \\
	\end{tabular}
\label{tab:uniformdistrscenarioparameters}
\end{table}

Figure~\ref{fig:Diffmu1mu2overG1G21bps} shows the corresponding transmission times. It can be seen that transmission times should compensate for the change in transmit powers on each link. The target rates are fulfilled by appropriate selection of transmission times $\mu_i$. For example, in a cellular system, this occurs in the downlink case between a cell edge user with low channel gain on link~$1$ and a center user with high channel gain on link~$2$. The difference in transmission power in Figure~\ref{fig:DiffP1P2overG1G21bps} shows that at higher differents in \ac{SNR} there is less flexibility in the selection of $\mu_i$ resulting in a smaller difference of transmission powers.

\begin{figure}[ht]
\centering
\subfigure[Optimal transmission powers in a 2-link system as a function of channel gain difference. The difference in powers $\Delta P$ in dB is equivalent to the ratio in linear terms.]{
\includegraphics[width=0.6\textwidth]{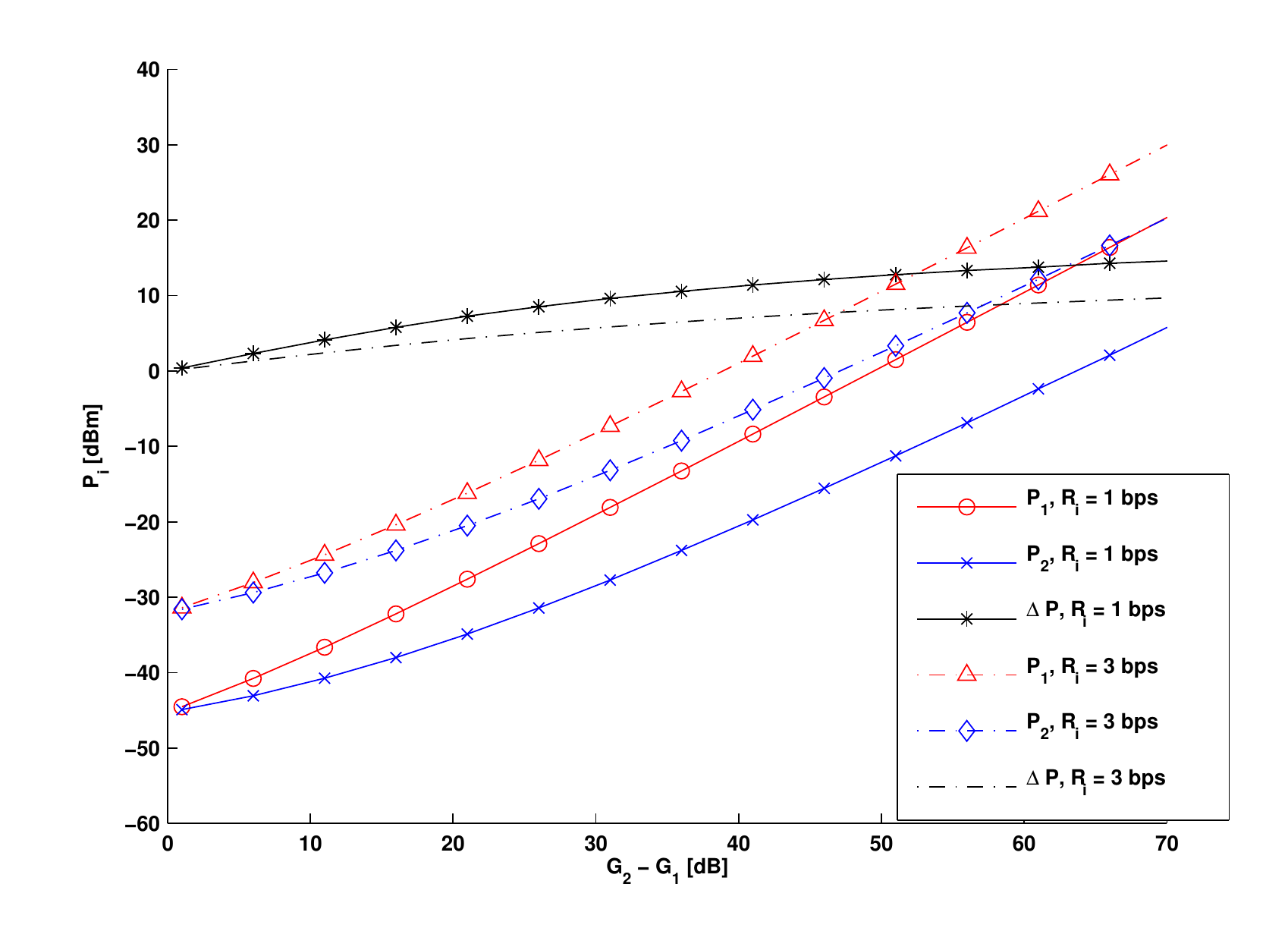}
\label{fig:DiffP1P2overG1G21bps}
}
\subfigure[Optimal transmission times in a 2-link system as a function of channel gain difference.]{
\includegraphics[width=0.6\textwidth]{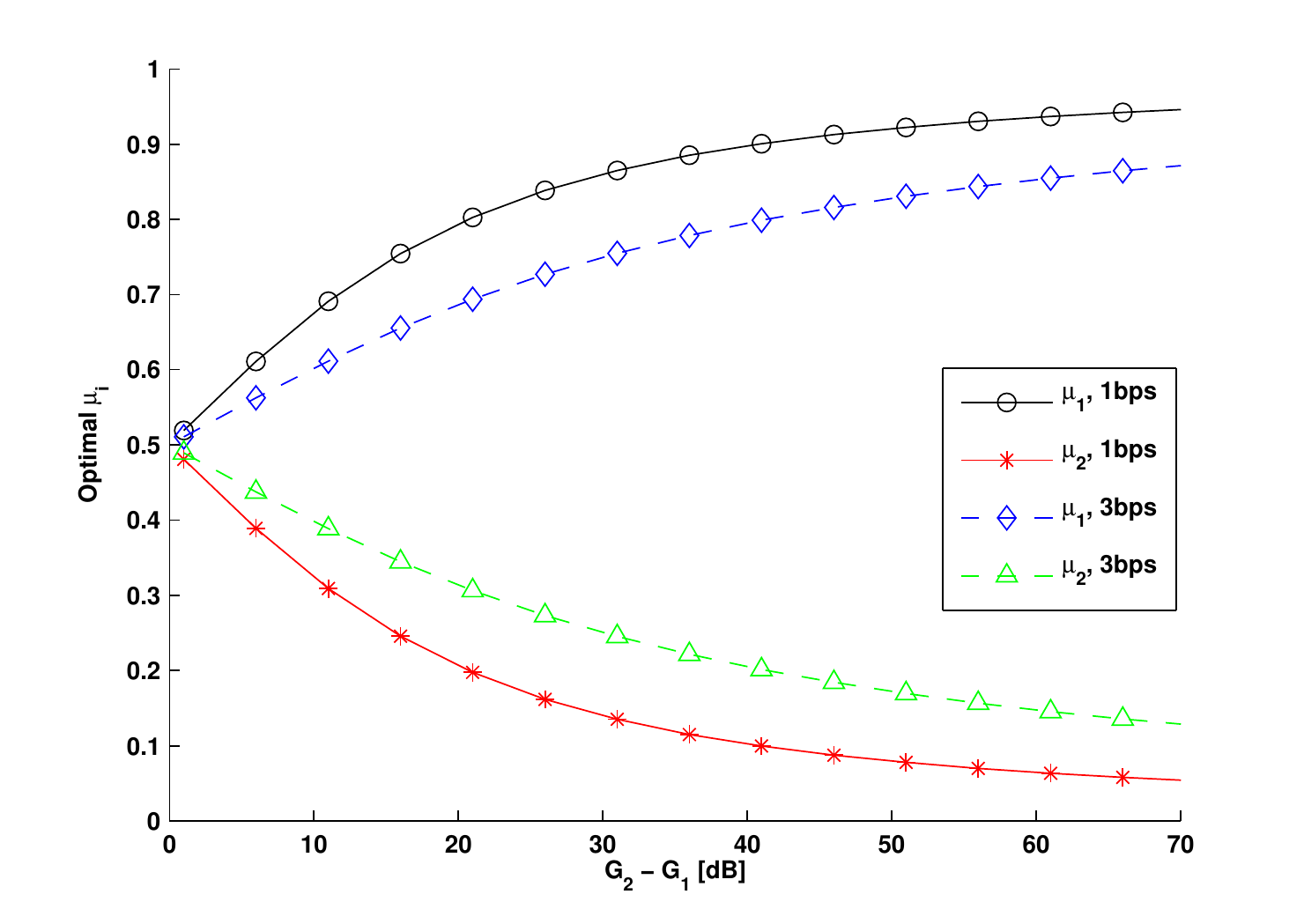}
\label{fig:Diffmu1mu2overG1G21bps}
}
\caption{Illustration of optimal transmission powers and times.}
\label{fig:DiffoverG1G21bps}
\end{figure}

For more than two users, the same holds true. If one link in the system degrades, all links should increase their power. To uphold the rate requirements, the unchanged links will reduce transmission time and the bad link receives more transmission time. 

%
%

\subsection{Optimal strategy on equal channels}

When considering the system which depends on the given channel gains and target rates, finding the optimal transmission times $\mu_i$ is an analytically difficult problem for which to our knowledge no closed form solution exists~\cite{hah1101}. However, there exists a special case, which can be analytically solved. Suppose all links have equal channel gains $G_i = G$, then \eqref{PSYS} simplifies to
\begin{equation}
\label{PSYS2}
 \overline{P}_{\rm{SYS}}(\overline{R}_i)  = \frac{N_0}{G} \sum_{i=1}^{N_{\rm{L}}} \mu_i \left( 2^\frac{\overline{R}_i}{W \mu_i} - 1 \right).
\end{equation}

This new problem can be solved using a Lagrange multiplier.

\begin{theorem}
 \label{theorem}
\begin{equation}
 \mu_i = \frac{R_i}{\sum_{i=1}^{N_{\rm{L}}} R_i}
\end{equation}
is the minimizer of 
\begin{equation}
\label{PSYS3}
 \overline{P}_{\rm{SYS}}(\overline{R}_i)  = \sum_{i=1}^{N_{\rm{L}}} \mu_i \left( 2^\frac{\overline{R}_i}{W \mu_i} - 1 \right).
\end{equation}

\end{theorem}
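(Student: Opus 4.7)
The plan is to solve the minimization of $\overline{P}_{\rm{SYS}}$ over the simplex $\{\mu \in \mathbb{R}^{N_{\rm L}}_{>0} : \sum_i \mu_i = 1\}$ by the Lagrange multiplier method, then verify that the resulting stationary point is in fact the global minimum via convexity. Write $f_i(\mu_i) = \mu_i \bigl(2^{\overline{R}_i/(W\mu_i)} - 1\bigr)$ and form the Lagrangian
\begin{equation*}
\mathcal{L}(\mu,\lambda) = \sum_{i=1}^{N_{\rm L}} f_i(\mu_i) - \lambda \Bigl(\sum_{i=1}^{N_{\rm L}} \mu_i - 1\Bigr).
\end{equation*}

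First I would differentiate $\mathcal{L}$ with respect to each $\mu_i$. A short computation gives
\begin{equation*}
\frac{\partial \mathcal{L}}{\partial \mu_i} = 2^{\overline{R}_i/(W\mu_i)}\Bigl(1 - \tfrac{\overline{R}_i \ln 2}{W\mu_i}\Bigr) - 1 - \lambda.
\end{equation*}
The crucial observation is that the right-hand side depends on $\mu_i$ and $\overline{R}_i$ only through the ratio $x_i := \overline{R}_i/(W\mu_i)$. Setting each derivative to zero therefore forces $g(x_i) = 1 + \lambda$ for all $i$, where $g(x) = 2^x(1 - x \ln 2)$. Differentiating once more gives $g'(x) = -x(\ln 2)^2 \, 2^x$, which is strictly negative for $x > 0$; hence $g$ is strictly monotone on the admissible range and the equation $g(x) = 1+\lambda$ has a unique solution. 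Consequently all $x_i$ coincide, which is the same as saying $\mu_i \propto \overline{R}_i$. Imposing the constraint $\sum_i \mu_i = 1$ then fixes the proportionality constant and yields the claimed formula $\mu_i = \overline{R}_i / \sum_j \overline{R}_j$.

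The main obstacle is justifying that this stationary point is the \emph{global} minimum rather than merely a critical point of the Lagrangian. I would handle this by recognizing $\mu_i \, 2^{\overline{R}_i/(W\mu_i)}$ as the perspective function $\mu_i \, \phi(\overline{R}_i/(W\mu_i))$ of the convex map $\phi(x) = 2^x$. Perspective preserves convexity, so each $f_i$ is a convex function of $\mu_i$ on $(0,\infty)$ (the subtracted linear term $-\mu_i$ does not affect this), and the objective is a sum of convex functions over an affine feasible set. Hence any KKT point is the unique minimizer, which completes the argument. A brief sanity check against the two-user equal-rate case considered earlier, where the formula gives $\mu_1 = \mu_2 = 1/2$ and thus equal transmit powers, confirms the result matches the behaviour reported in Figure~\ref{fig:DiffoverG1G21bps} at zero channel-gain difference.
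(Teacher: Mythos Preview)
Your argument is correct and follows essentially the same route as the paper: form the Lagrangian, observe that the first-order conditions depend on $\mu_i$ only through the ratio $\overline{R}_i/(W\mu_i)$, and invoke the strict monotonicity of $x\mapsto 2^{x}(1-x\ln 2)$ on $x\ge 0$ to force all ratios equal. The only substantive addition you make is the perspective-function convexity step certifying that the KKT point is a global minimizer, which the paper omits; this is a genuine improvement in rigor but not a different approach.
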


\begin{proof}
The partial derivatives of \eqref{PSYS3} and \eqref{muconstraint} are
$$
\frac{\partial\overline{P}_{\rm{SYS}}(\overline{R}_i)}{\partial \mu_i} = 2^{\frac{R_i}{W \mu_i}}-1-(\frac{R_i}{W \mu_i})\log(2)2^{\frac{R_i}{W \mu_i}}
$$
and unity, respectively; $\log$ refers to the natural logarithm.

Therefore, the gradients of the Lagrange function $\Lambda(\mu_i, \lambda)$ for any $i$ are
\begin{equation}
  \nabla_{\mu_i, \lambda} \Lambda(\mu_i, \lambda)= \frac{\partial\overline{P}_{\rm{SYS}}(\overline{R}_i)}{\partial \mu_i} + \lambda.
\end{equation}

The partial derivatives are independent of each other and only depend on $\lambda$. Thus, they have to be equal.

Functions of the type $x\mapsto 2^{x}(1-x\log(2))$ are monotone on $x\ge0$, hence there cannot be two different arguments for one $i$ yielding the same function value. That is, one must have $\mu_i=cR_i$ for a given $c$. Since $\sum\mu_i=1$, $1/c$ is the sum of $R_i$ over $i$ and the optimum is given by
\begin{equation}
\label{musolution}
\mu_i=\frac{R_i}{R_1+\cdots+R_{N_{\rm{L}}}}.
\end{equation}
\end{proof}

This means that transmission times are directly proportional to the link target rates. The optimal transmission power on each link can be directly found from the sum of target rates using \eqref{eq:ptxavg} and \eqref{musolution}. Optimization of mobile user subsets can be achieved by assigning mobiles in groups according to their channel quality (or distance from the \ac{BS}) or by appropriate scheduling in time of users with similar link quality.

%

\section{Power control strategy in the power model}
\label{powermodel}
The power model~\cite{aggsoigdb1101} is a detailed representative model of how such \ac{BS} components as radio transceiver, baseband interface, power amplifier, AC-DC-converter, DC-DC-converter and cooling fans act together forming an overall power consumption behavior. It was found that although highly complex in detail, the combined consumption of today's \acp{BS} can be represented by an affine function. There is a static comsumption that applies independent of load, $P_0$, which we refer to as the idle power consumption. In addition, there is consumption which depends on the power delivered to the antenna. The rate of increase is represented by a load factor~$l$. The model is analytically represented by 
\begin{equation}
 \label{psupplymax}
P_{\rm{supply}} = P_0 + l P_{\rm{Tx}},
\end{equation}
with $P_{\rm{Tx}} \leq P_{\rm{max}}$ and is illustrated in Figure~\ref{fig:PM}.

As a metric for the share of load-dependent consumption of the overall consumption, we define the load dependence of a \ac{BS} as
\begin{equation}
 \eta_{\rm{ld}} = \frac{l P_{\rm{max}}}{P_0+l P_{\rm{max}}}.
\end{equation}

\begin{figure}
\centering
 \includegraphics[width=0.6\textwidth]{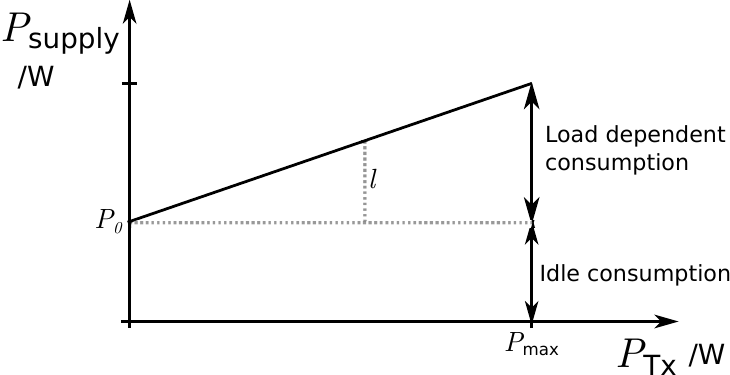}
\caption{Illustration of the power model with load dependent and idle consumption.}
\label{fig:PM}
\end{figure}

It is important to consider power control and transmission power in relationship to the consumption of the entire \ac{BS}. The power model provides this by mapping transmission power to supply power. Instead of the transmission power, we are now able to minimize the supply power consumption. Addition of the power model to the power control strategy changes \eqref{PSYS} in the following fashion:

\begin{equation}
\label{PSYSPM}
\begin{aligned}
 \overline{P}_{\rm{supply}}(\overline{R}_i)  &= \sum_{i=1}^{N_{\rm{L}}} \mu_i \left( P_0 + l P_{\rm{Tx},i}(\overline{R}_i)  \right)\\
					     &= \sum_{i=1}^{N_{\rm{L}}} \mu_i \left( P_0 + l \frac{N_0}{G_i} \left( 2^\frac{\overline{R}_i}{W \mu_i} - 1 \right) \right)
\end{aligned}
\end{equation}

Although not intuitive, it turns out that addition of the power model does not affect the power control strategy. Extending the Lagrange multiplier analysis to the power model inclusive problem, the partial derivative of the cost function now becomes
\begin{equation}
 \frac{\partial\overline{P}_{\rm{SYS}}(\overline{R}_i)}{\partial \mu_i} = P_0 + l \frac{N_0}{G_i} \left( 2^{\frac{R_i}{W \mu_i}}(1-(\frac{R_i}{W \mu_i})\log(2))-1 \right).
\end{equation}

In line with the proof for Theorem~\ref{theorem}, it still holds that all partial derivatives are equal. Hence, $P_0$ and $l$ have no influence on the solution. This bears the important consequence that the power control strategy is independent of the underlying power model, and thus the underlying \ac{BS}. It is therefore optimal for all \acp{BS}.

However, although the strategy is independent of the hardware, the benefits of power control strongly depend on hardware. See Figure~\ref{fig:PMcomparisonOverRate} for an illustration in a cell with ten users. Only when the load dependence factor~$\eta_{\rm{ld}}$ of a \ac{BS} is high enough, can power control provide relevant savings. The effectiveness of power control in future \acp{BS} therefore strongly depends on hardware developments. For comparison, we consider typical load dependence factors of different \ac{BS} types in Table~\ref{tab:ldfactors}. Due to the low range and low transmit powers of smaller \ac{BS} types, they have load dependence factors much smaller than $50\%$. Therefore, power control has limited applicability in smaller \ac{BS} types than in macro stations.   

\begin{figure}
\centering
 \includegraphics[width=0.6\textwidth]{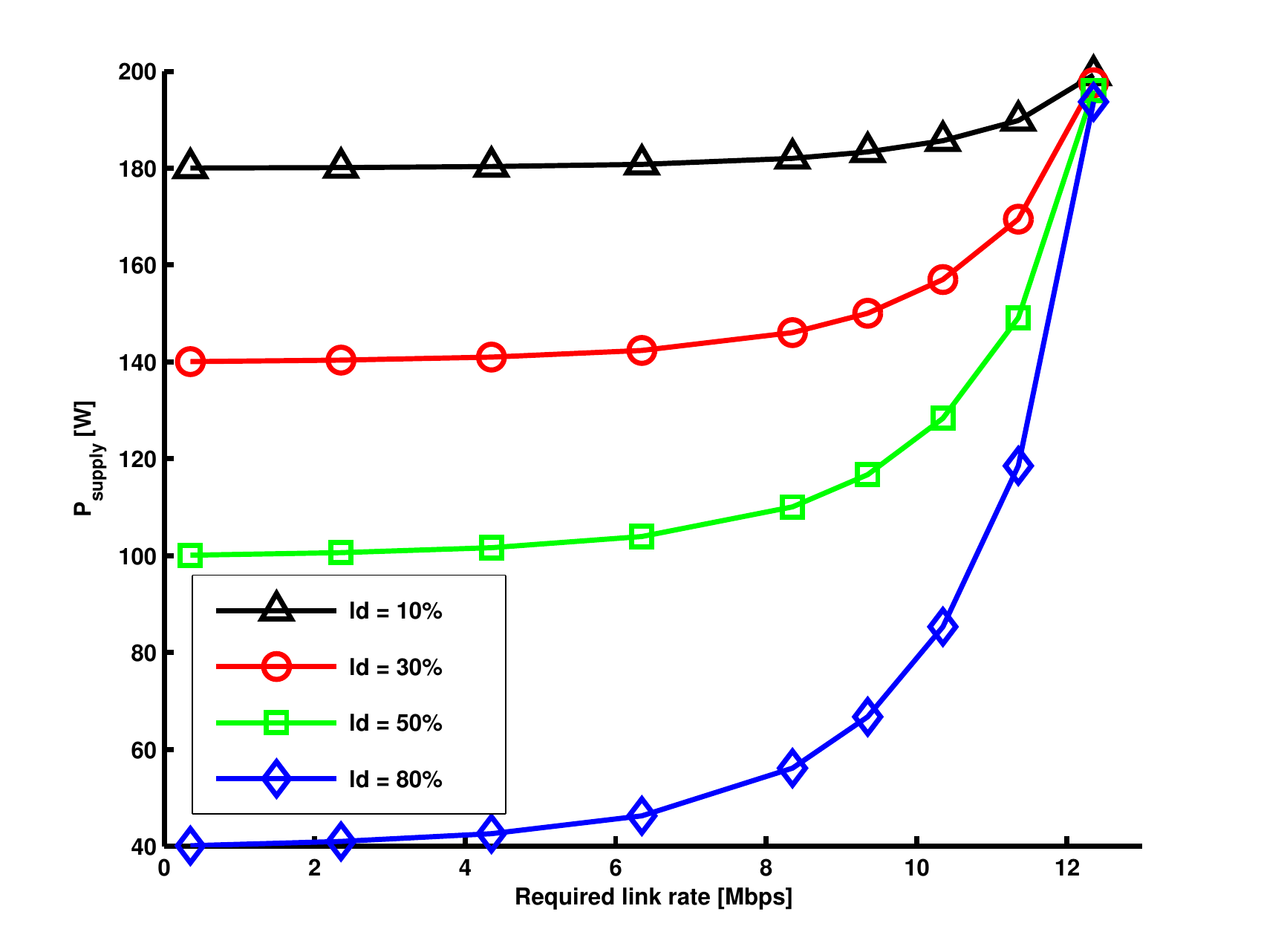}
\caption{Comparison of power models on the effectiveness of power control.}
\label{fig:PMcomparisonOverRate}
\end{figure}

\begin{table}
      \caption{Typical load dependence factors by \ac{BS} type~\cite{czbfjgav1001}}
\centering
      \begin{tabular}{c|cc}
	\ac{BS} type          		&$P_{\rm{max}}$	& Load dependence factor~$\eta_{\rm{ld}}$\\
	\hline
	Macro	 			&$46$~dBm		& $50\%$\\
	Micro	 			&$38$~dBm		& $30\%$\\
	Pico	 			&$21$~dBm		& $14\%$\\
	Femto	 			&$17$~dBm		& $10\%$\\
	\end{tabular}
\label{tab:ldfactors}
\end{table}

\section{Conclusion}
\label{conclusion}
In this paper, supply power optimal transmission power control is studied in the multi-user downlink setting where the \ac{QoS} criterion is a target link rate. The minimization variable is the average supply power of a \ac{BS} as a function of the required rates and channel gains. The system can be analytically represented. The first important finding is that in power optimal allocation, the system has to be considered as a whole. If a channel in the system degrades, the transmission powers for all links have to be increased while transmission durations for all links except the degrading link are decreased. While this convex problem does not have a closed-form solution, it can be simplified if only users with equal channel gains are considered at one time. It has been shown that in that case the optimal transmission time is proportional to the required link rate over the sum of link rates and the optimal transmission powers are equal for all links. To account for realistic hardware consumption, a power model is considered which represents static and load dependent consumption of a \ac{BS}. The addition of the power model allows estimation of absolute consumption values rather than transmission power. Inclusion of the power model into the power allocation problem reveals that the optimal strategy holds independent of the chosen power models. However, only if the power consumption of a \ac{BS} is largely load-dependent, does power control result in significant savings. Otherwise, the idle consumption is several magnitudes larger than the effect of transmission power. Since \ac{BS} types other than macro stations have much lower transmission powers and thus lower load dependence factors, power control is most effective in macro stations.

\balance

\section*{Acknowledgements}

This work has received funding from the European Community's $7^{\textrm{th}}$ Framework Programme [FP7/2007-2013. EARTH, Energy Aware Radio and neTwork tecHnologies] under grant agreement n${}^\circ$ 247733.

The authors gratefully acknowledge the invaluable insights and visions received from partners of the EARTH consortium.


\bibliography{../../../../../../DOCOMO/reference/DOCOMO,../../../../../../DOCOMO/reference/general}

\bibliographystyle{ieeetr}

\end{document}